%%%%%%%%%%%%%%%%%%%%%%%% editor.tex %%%%%%%%%%%%%%%%%%%%%%%%%%%%%
%
% sample root file for the contributions of a "contributed volume"
%
% Use this file as a template for your own input.
%
%%%%%%%%%%%%%%%%%%%%%%%%%%%%% Springer %%%%%%%%%%%%%%%%%%%%%%%%%%

% RECOMMENDED %%%%%%%%%%%%%%%%%%%%%%%%%%%%%%%%%%%%%%%%%%%%%%%%%%%
\documentclass[graybox, envcountchap]{svmult}

% choose options for [] as required from the list
% in the Reference Guide

%\usepackage{type1cm}        % activate if the above 3 fonts are 
                             % not available on your system

\usepackage{imakeidx}         % allows index generation
\usepackage{graphicx}        % standard LaTeX graphics tool
                             % when including figure files
                             
                             % begin extra author packages
                             \usepackage{subfigure}
                             \usepackage{algorithm,algorithmic}
                             \usepackage{wrapfig}
                             \usepackage{bm}
                             \usepackage{multirow}
                             %\input{Ablayev/Qcircuit}%% Ablayev
                             %\usepackage[numbers]{natbib}%% Ablayev 
                             % end extra author packages
                             
\usepackage{multicol}        % used for the two-column index
\usepackage[bottom]{footmisc}% places footnotes at page bottom

\usepackage{newtxtext}       % 
\usepackage{newtxmath}       % selects Times Roman as basic font

% see the list of further useful packages in the Reference Guide

\makeindex          % used for the subject index
                       % please use the style svind.ist with
                       % your makeindex program

%%%%%%%%%%%%%%%%%%%%%%%%%%%%%%%%%%%%%%%%%%%%%%%%%%%%%%%%%%%%%%%%%

\begin{document}
\mainmatter%%%%%%%%%%%%%%%%%%%%%%%%%%%%%%%%%%%%%%%%%%%%%%%%%%%%%%%
\newcommand{\ket}[1]{{\left\vert{#1}\right\rangle}}

\title*{Quantum Algorithms for String Processing}
%\title*{Quantum string matching based on hashing}
%\title{Search based on quantum hashing in texts }
% Use \titlerunning{Short Title} for an abbreviated version of
% your contribution title if the original one is too long
\author{Farid Ablayev \and Marat Ablayev \and Kamil Khadiev \and Nailya Salihova \and Alexander Vasiliev}
\index{Ablayev, Farid}
\index{Ablayev, Marat}
\index{Khadiev, Kamil}
\index{Salihova, Nailya}
\index{Vasiliev, Alexander}
\authorrunning{F. Ablayev, M. Ablayev, K. Khadiev, N. Salihova, A. Vasiliev}
%\author{Name of First Author and Name of Second Author}
% Use \authorrunning{Short Title} for an abbreviated version of
% your contribution title if the original one is too long
\institute{F. Ablayev, M. Ablayev, K. Khadiev,  A. Vasiliev \at Kazan Federal University, Kazan, Russian Federation; Kazan E. K. Zavoisky Physical-Technical Institute, Kazan, Russian Federation,
\email{fablayev@gmail.com}
\and
N. Salihova \at Kazan Federal University, Kazan, Russian Federation
%\and Name of Second Author \at Name, Address of Institute \email{name@email.address}
}
%
% Use the package "url.sty" to avoid
% problems with special characters
% used in your e-mail or web address
%
\maketitle

\abstract{
%Possibilities of quantum  speedup for the String matching problem are investigated  during last decades by different authors. Quantum algorithms based on Grover's quantum search provides square root speedup for different settings of this problem. 
%In this research we implement a quantum compressing hashing technique   for string matching.  We consider a quantum  compressing method for information. Such an approach allow us to improve several earlier quantum search algorithms for string matching in the direction of  the minimization of number of qubits needed for string matching   
In the paper, we investigate two problems on strings. The first one is the String matching problem, and the second one is the String comparing problem. We provide a quantum algorithm for the String matching problem that uses exponentially less quantum memory than existing ones. The algorithm uses the hashing technique for string matching, quantum parallelism, and ideas of Grover's search algorithm. Using the same ideas, we provide two algorithms for the String comparing problem. These algorithms also use exponentially less quantum memory than existing ones. Additionally, the second algorithm works exponentially faster than the existing one.
% Each chapter should be preceded by an abstract (no more than 200 words) that summarizes the content. The abstract will appear \textit{online} at \url{www.SpringerLink.com} and be available with unrestricted access. This allows unregistered users to read the abstract as a teaser for the complete chapter.
% Please use the 'starred' version of the \texttt{abstract} command for typesetting the text of the online abstracts (cf. source file of this chapter template \texttt{abstract}) and include them with the source files of your manuscript. Use the plain \texttt{abstract} command if the abstract is also to appear in the printed version of the book.
}

%\date{}
%%%%%%%%%%%%%%%%%%%%%%%%%%%%%%%%%%%%%%%%%%%%%%%%%
%%               Introduction                  %%  
%%%%%%%%%%%%%%%%%%%%%%%%%%%%%%%%%%%%%%%%%%%%%%%%%
\section{Introduction}
Possibilities of quantum speedup for string matching problem have been investigated during the last decades by different authors \cite{ramesh2003string, montanaro2017quantum, soni2020pattern}. Most of these algorithms are based on Grover's algorithm \cite{grover1996fast,bbht98} for search through unstructured data.

In the paper we consider a problem of searching any occurrence of a string $w$ of length $m$ in a string $s$ of length $n$. The best known classical algorithm for this problem is Knuth-Morris-Pratt algorithm \cite{knuth1977fast}. Time complexity of this algorithm is $O(n+m)$. Quantum algorithms for this problem are typically considered in the query model \cite{nc2010,a2017,aazksw2019part1}. Here the algorithm has access to an oracle (the unchangeable part of memory that holds input data) and complexity is a number of queries to this oracle. In the early 2000s researchers obtained one of the first results on quantum algorithms for the problem \cite{ramesh2003string}. This algorithm has query complexity $O(\sqrt{n}\log{\sqrt{\frac{n}{m}}}\log{m} + \sqrt{m}\log^2{m})$. Later in 2017 the algorithm \cite{montanaro2017quantum} with query complexity  $\Tilde{O}(\sqrt{\frac{n}{m}}2^{O(\sqrt{\log{m}})})$ was presented. In 2020, Soni and Rasool \cite{soni2020pattern} suggested an algorithm with $O(n \log n)$ query complexity. Note that, these algorithms have time complexity (or circuit complexity) logarithmic times larger than query complexity.

At the same time, researchers did not pay attention to the size of the quantum memory that we should use for algorithms (a changeable by algorithm part and the unchangeable part that oracle holds). Due to our analysis, all of these algorithms use $O(n+m)$ quantum bits (including the unchangeable part of the quantum memory). Due to the restricted resources of the current and near-future devices, the quantum memory size, even unchangeable, is an important question.

In the paper, we provide a quantum algorithm for solving string matching problem with $O(\sqrt{n}(\log n + \sqrt{\log n+\log m}\cdot (\log\log n+\log\log m)))$ time complexity and $O((\log n)^2+\log n \cdot \log m)$ qubits of memory (Theorem \ref{th:sm}). The algorithm is based on Grover's search algorithm, the idea of hashing (or fingerprinting method \cite{Fre79}) and ideas of Rabin-Karp algorithm \cite{kr87}.
The algorithm is not a query model algorithm but a quantum circuit algorithm that can be used as a part of other more complex algorithms for other problems. Many known algorithms like \cite{hhl2009} use similar motivation.
Our algorithm assumes that the initial state is prepared. At the same time, this initial state can be prepared approximately as fast as loading data to unchangeable memory for oracle. 

Additionally, we use the same ideas for comparing two strings $u$ and $v$ in lexicographical order. The existing algorithm \cite{ki2019} uses modifications \cite{k2014,ll2015,ll2016} of Grover's search \cite{grover1996fast} and compares two strings with query complexity $O(\sqrt{k})$, time complexity $O(\sqrt{k}\log k)$ and uses $O(k)$ qubits of memory, where $k$ is the minimum of lengths of two strings. Here we use the idea with hashing and provide two algorithms. The first one has $O(\sqrt{k}\log k)$ time complexity and uses $O((\log k)^2)$ qubits (Theorem \ref{th:comp1}). The second one has $O((\log k)^{2}\log \log k)$ time complexity and uses $O((\log k)^2)$ qubits (Theorem \ref{th:comp2}). Both algorithms have an exponential advantage in memory and the second one has an exponential advantage in speed. At the same time, the second algorithm is more complex.

The structure of the paper is the following. Section \ref{sec:prelims} contains preliminaries. We present an algorithm for string matching in Section \ref{sec:smatch}. Section \ref{sec:comp} contains algorithms for comparing two strings. The conclusion is presented in Section \ref{sec:conclusion}.

%%%%%%%%%%%%%%%%%%%%%%%%%%%%%%%%%%%%%%%%%%%%%%%%
%%% Preliminaries
%%%%%%%%%%%%%%%%%%%%%%%%%%%%%%%%%%%%%%%%%%%%%%%
\section{Preliminaries}\label{sec:prelims}
Let us consider a string $u=(u_1,\dots,u_\ell)$ for some integer $\ell$. Then, $|u|=\ell$ is the length of the string. $u[i,j]=(u_i,\dots,u_j)$ is a substring of $u$. 

In the paper, we compare strings in the lexicographical order. For two strings $u$ and $v$, the notation $u<v$ means $u$ precedes $v$ in the lexicographical order.

In the paper, we consider only binary strings. At the same time, all results can be easily modified for a non-binary alphabet.
\subsection{Rolling Hash for Strings Comparing}\label{sec:roll-hash}
\subsubsection{Rolling Hash}
The rolling hash was presented in \cite{Fre79,kr87}.
For a string $u=(u_1,\dots,u_{|u|})$ we define a rolling hash function 
$h_p(u)=\left(\sum_{i=1}^{|u|}u_i\cdot 2^{i-1}\right)\mbox{ mod }p,$
where $p$ is a prime integer. 

\subsubsection{Fingerprinting Technique for Comparing Strings}
We can use the rolling hash and the fingerprinting method \cite{Fre79} for comparing two strings $u$ and $v$. Let us randomly choose $p$ from the set of the first $r$ primes, such that $r\leq \frac{\max(|u|,|v|)}{\varepsilon}$ for some $\varepsilon>0$. According to the Chinese Remainder Theorem and \cite{Fre79}, if we have $h_p(u)=h_p(v)$, then $u=v$ with error probability at most $\varepsilon$. If we invoke a comparing procedure $\delta$ times, then we should choose a prime number from the first $\frac{\delta \cdot \max(|u|,|v|)}{\varepsilon}$ primes for getting the error probability $\varepsilon$ for the whole algorithm. Due to Chebyshev's theorem, the $r$-th prime number $p_r\approx r\ln r$. If $r=\frac{\delta \cdot \max(|u|,|v|)}{\varepsilon}$, then $p_r=\frac{\delta \cdot \max(|u|,|v|)}{\varepsilon}\cdot (\ln(\delta) + \ln(\max(|u|,|v|))-\ln(\varepsilon))$ and it can be encoded using $O(\log(\delta) + \log(\max(|u|,|v|))-\log(\varepsilon))$ bits.

\subsubsection{Comparing Strings Using a Rolling Hash} \label{sec:roll-hash-bins}
For a string $u$, we can compute a prefix rolling hash, that is
$h_p(u[1,i])$. It can be computed in $O(|u|)$ running time using formula
\[h_p(u[1,i])=\left(h_p(u[1,i-1])+(2^{i-1}\mbox{ mod }p)\cdot u_i\right)\mbox{ mod }p\mbox{ and }h_p(u[1:0])=0.\]

Assume, that we have computed prefix rolling hashes for two strings $u$ and $v$. Then, we can compare these strings in the lexicographical order in $O(\log \min(|u|,|v|))$ running time. The algorithm is following. We search the longest common prefix of $u$ and $v$. Let $lcp(u,v)$ be an integer $x$ such that $u_1=v_1,\dots,u_x=v_x$ and $u_{x+1}\neq v_{x+1}$. In the case of $u$ is a prefix of $v$, then $lcp(u,v)=|u|$. In the case of $v$ is a prefix of $u$, we have $lcp(u,v)=|v|$. Notice, that for any integer $mid\in\{1,\dots, \min(|u|,|v|)\}$ the following two statements are true.
\begin{itemize}
    \item If $mid\leq lcp(u,v)$, then $u[1,mid]=v[1,mid]$, and $h_p(u[1,mid])=h_p(v[1,mid])$.
    \item If $mid> lcp(u,v)$, then $u[1,mid]\neq v[1,mid]$, and $h_p(u[1,mid])\neq h_p(v[1,mid])$ with high probability.
\end{itemize}

Using binary search we find the index $x$ such that $h_p(u[1,x])=h_p(v[1,x])$ and $h_p(u[1,x+1])\neq h_p(v[1,x+1])$. In that case $lcp(u,v)=x$.
  After that, we compare $u_{t}$ and $v_t$ for $t=lcp(u,v)+1$. Then, we get the following cases:
  \begin{itemize}
      \item If $u_t<v_t$ or $t=|u|<|v|$, then $u<v$.
      \item If $u_t>v_t$ or $t=|v|<|u|$, then $u>v$.
      \item If $|u|=|v|=t-1$, then $u=v$.
  \end{itemize}
Binary search works in $O(\log (\min(|u|,|v|)))$ running time.

\subsection{Problems}
\paragraph{String Matching Problem}
Given a string (text) $s=(s_1,\dots,s_n)$ of length $n$ and a string $w$ of length $m$,  where $m\leq n$, one needs to determine the index of the string  $w$ occurrence in the text $s$. Formally, the task is to find the index $d$ such that $w=(s_{d}\dots s_{d+m-1})$.

We use the following notations. Let $T(s)=(s^1, \dots, s^{n-m+1})$, where $s^i=s[i,i+m-1]$ for $i\in\{1,\dots,n-m+1\}$. $T(s)$ is a sequence of substrings of length $m$. Let $N=n-m+1$. 

\paragraph{String Comparing Problem}
Given two strings $u$ and $v$. The problem is comparing these two strings in lexicographical order. Formally, we want to determine one of three options:
\begin{itemize}
      \item If $u<v$, then the result is $-1$.
      \item If $u>v$, then the result is $+1$.
      \item If $u=v$, then the result is $0$.
  \end{itemize}

\subsection{Basics of Quantum Computation and Computational Model}
%\subsubsection{Basics of quantum computing.}
The main difference between quantum computation and the classical one is manipulations with quantum bits (qubits). A state of a qubit is a vector from $2$-dimensional complex Hilbert space. We can represent it using Dirac notation as $|\psi\rangle =a|0\rangle+b|1\rangle$, where $|0\rangle$ and $|1\rangle$ are unit vectors, and $a$ and $b$ are complex numbers such that $|a|^2+|b|^2=1$. We can use two kinds of transformations: \emph{transition} and \emph{measurement}. The transition is multiplying a vector of state to $2\times 2$ unitary matrix. The measurement is obtaining $0$-result with probability $|a|^2$ and $1$-result with probability $|b|^2$. Similarly, a state of a register of $q$ qubits is a vector from $2^q$-dimensional complex Hilbert space, and is traditionally denoted as $|\psi\rangle=\sum_{i=0}^{2^q-1}a_i|i\rangle$, where $\sum_{i=0}^{2^q-1}|a_i|^2=1$. 
Transformations are defined in an analogous manner.

A quantum circuit is a circuit that uses four types of gates that are $1$-qubit Hadamard gate ($H$-gate), $X$-gate and $Z$-gate; and $2$-qubit $CNOT$-gate. That are
\[\begin{array}{llll}
X=\left(\begin{array}{cc}0 & 1\\1 & 0\end{array}\right),&\quad Z =\left(\begin{array}{cc}1 & 0\\0 & -1\end{array}\right),&\quad H=\frac{1}{\sqrt{2}}\left(\begin{array}{cc}1 & 1\\1 & -1\end{array}\right),&\quad CNOT=\left(\begin{array}{cccc}1 & 0&0&0\\0 & 1&0&0\\0 & 0&0&1\\0 & 0&1&0\end{array}\right).
\end{array} 
\]
An algorithm's time complexity is the size of a circuit that uses only presented gates and implements the algorithm.

The standard form of the quantum query model is a generalization of the decision tree model of classical computation that is commonly used to lower-bound the amount of time required by a computation.
    Let $f:D\rightarrow \{0,1\},D\subseteq \{0,1\}^M$ be an $M$ variable function we wish to compute on an input $x=(x_0,\dots,x_{M-1})\in D$. We have an oracle access to the input. That is implemented by storing the input into an unchangeable part of quantum memory $\ket{x}$. The oracle access is realized by a specific unitary transformation usually defined as
    {%
    \relpenalty 100000
    \binoppenalty 100000
    $\ket{i}\ket{\phi}\ket{\psi}\ket{x}\rightarrow \ket{i}\ket{\phi \oplus x_i}\ket{\psi}\ket{x}$
    } where the $\ket{i}$ register indicates the index of the variable we are querying, $\ket{\phi}$ is the output register, and $\ket{\psi}$ is some auxiliary work-space. An algorithm in the query model consists of alternating applications of arbitrary unitaries (that are independent of the input) and the input-dependent query unitary, and a measurement in the end. The smallest number of queries for an algorithm that outputs $f(x)$ with probability $\geq \frac{2}{3}$ on all $x$ is called the quantum query complexity of the function $f$.
    
More information on quantum computation and computational models can be found in \cite{nc2010,a2017,aazksw2019part1}.

%%%%%%%%%%%%%%%%%%%%%%%%%%%%%%%%%%%%%%%%%%%%%%%%%
% Quantum Algorithm for String Matching Problem %
%%%%%%%%%%%%%%%%%%%%%%%%%%%%%%%%%%%%%%%%%%%%%%%%%
\section{Quantum Algorithm for String Matching Problem}\label{sec:smatch}
Firstly, let us present Grover's search algorithm because we use its ideas as a base for our algorithm.
\subsection{Grover's Search Algorithm}\label{sec:grover}

\begin{definition}[Search problem]
Suppose we have a set of objects named $\{1,2,\dots, M\}$, of which some are targets. Suppose  $\mathcal{O}$ is an oracle that identifies the targets.
The goal of a search problem is to find a target $i \in \{1,2,\dots, M\}$ by making queries to the oracle $\mathcal{O}$.
\end{definition}
Remind that Oracle is implemented by accessing an unchangeable (by the algorithm) part of the quantum memory.

In search problems, one will try to minimize the number of queries to the oracle. In the classical setting, one needs $O(M)$ queries to solve such a problem. Grover, on the other hand, constructed a quantum algorithm that solves the search problem with only $O(\sqrt{M})$ queries~\cite{grover1996fast}, provided that there is a unique target. 

The algorithm uses additional $\log M$ qubits for indexing element in a state $\frac{1}{\sqrt{M}}\sum_{t=0}^{M-1}\ket{t}$ and one additional qubit $\ket{\xi}$ in a state $\frac{1}{\sqrt{2}}(\ket{0}-\ket{1})$. On step of the algorithm is applying two operations: Grover's diffusion $D$ and a query to oracle $Q$. The matrix $D$ can be implemented using $\log M$ gates due to \cite{grover1996fast}. 

The matrix $Q$ is a transformation that converts $\ket{t}\ket{\xi\oplus f(t)}=(-1)^{f(t)}\ket{t}\ket{\xi}$, where $f(t)$ is a Boolean function that shows whether $t$-th object is target.

After $O(\sqrt{M})$ iterations, the algorithm measures the quantum register and obtains the index of the target object with high probability.
If there are no target objects, then the algorithm returns any object with equal probability.

When the number of targets is unknown, Brassard \emph{et al.} designed a modified Grover algorithm that solves the search problem with $O(\sqrt{M})$ queries~\cite{bbht98}, which is of the same order as the query complexity of the Grover search.

The algorithm repeats Grover's search algorithm for $\log_2 (\sqrt{M})$ times. It does $2^j$ iterations on $j$-th repetition. Such behavior allows us to obtain one of the target objects with a probability at least $1/2$.
\subsection{Our Algorithm}

Let us choose a prime $p$ from the first $\frac{\delta \cdot m}{\varepsilon}$ prime numbers, where $0< \varepsilon<1$ is some constant and  $\delta=N$ because we will have $N$ hashes of substrings of the string $s$. Additionally, we will use a hash function $h_p$, that is discussed in Section \ref{sec:roll-hash} 

Assume that the initial state for our algorithm is the following one
\begin{equation}\label{eq:state}
   \ket{\varphi}=\ket{h_p(w)}\otimes\bigotimes_{t=1}^{\log_2 N}\frac{1}{\sqrt{N}}\sum\limits_{a=0}^{N-1}\ket{a}
    \otimes \ket{h(s^{a+1})}.
\end{equation}

\subsubsection{Unique Target Case}

Firstly, assume that there is only one position $d$ such that $s^d=w$. In that case, we use only the following part of the quantum register.    
\[
   \ket{\varphi'}=\ket{h_p(w)}\otimes\left(\frac{1}{\sqrt{N}}\sum\limits_{i=0}^{N-1}\ket{i}
    \otimes \ket{h(s^{i+1})}\right).\]

Let us have a function $f:\{0,\dots,N-1\}\to\{0,1\}$, such that $f(i)=1$ iff $h_p(w)=h(s^{i+1})$. We will discuss the implementation of the algorithm for $f$ later.

Then, we can add additional qubit $\ket{\xi}$ in a state $\frac{1}{\sqrt{2}}(\ket{0}-\ket{1})$ and apply $O(\sqrt{N})$ times $D$ and $Q$ matrices to.$\ket{\varphi'}$ state. Here $D$ is the Grover's diffusion and $Q$ is the transformation that works in the following way:
\[Q:\ket{i}\ket{h(s^{i+1})}\ket{h_p(w)}\ket{\xi}\to\ket{i}\ket{h(s^{i+1})}\ket{h_p(w)}\ket{\xi \oplus f(i)}=\]\[= (-1)^{f(i)}\ket{i}\ket{h(s^{i+1})}\ket{h_p(w)}\ket{\xi}\]

So, using the idea of Grover's search algorithm, we can do $O(\sqrt{N})$ iterations of $D$ and $Q$ and after that measure the quantum register and obtain the index $d$ such that $h_p(w)=h(s^{d+1})$.

Let us discuss, the implementation of the function $f$. The computation of $f(i)$ is equivalent to the problem of checking equality  of two strings $z=h_p(w)$ and $z'=h(s^{i+1})$ that are stored in quantum memory. Let us define a function $g_i:\{0,\dots,\lceil\log_2 p\rceil-1\}\to\{0,1\}$, where $g_i(j)=1$ iff $z_j\neq z'_j$. In other words, we mark the indexes of unequal symbols of two strings.
We can say that $f(i)=0$ iff there is $j\in\{0,\dots,\log_2 p\}$ such that $g_i(j)=1$.

Let us solve this problem using Grover's search algorithm. In fact, we have two strings in unchangeable memory, and using additional $O(\log \log p)$ qubits can implement Grover's search algorithm for searching an index $j_1$ such that $g_i(j_1)=1$.

If the Grover's search algorithm finds $j_1$ and $g_i(j_1)=1$, then $f(i)=0$. If the result index  $j_1$ is such that $g_i(j_1)=0$, then $f(i)=1$.

Note that for a standard version of  Grover's search algorithm, function $f$ should be computed with no error. At the same time, our version of the implementation of $f$ can return a result with constant error probability. That is why we should use the modification of Grover's search algorithm \cite{hmw2003} for bounded-error oracle. This algorithm uses the generalization of Grover's search algorithm that is called Amplitude Amplification \cite{bhmt2002}.

\begin{lemma}\label{lm:sm-unique-target}
The presented algorithm solves string matching problem for unique target with bounded error, has $O(\sqrt{n}(\log n + \sqrt{\log n+\log m}\cdot (\log\log n+\log\log m)))$ time complexity and uses $O(\log n+\log m)$ qubits of memory.
\end{lemma}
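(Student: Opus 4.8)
The plan is to treat the construction as two nested Grover searches and bound each separately: the outer search ranges over the $N=n-m+1$ candidate positions and uses the inner search as its oracle $Q$, while the inner search ranges over the $\lceil\log_2 p\rceil$ bit positions of the two hash registers. I would first fix the parameters. Since $p$ is taken from the first $\delta m/\varepsilon=Nm/\varepsilon$ primes (with $\delta=N$), $\varepsilon$ a constant and $N\le n$, Chebyshev's estimate gives $p=O\!\big(\tfrac{nm}{\varepsilon}\ln\tfrac{nm}{\varepsilon}\big)$, hence $L:=\lceil\log_2 p\rceil=O(\log n+\log m)$ and $\log L=O(\log\log n+\log\log m)$. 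Counting qubits in $\ket{\varphi'}$: the register $\ket{h_p(w)}$ and the register $\ket{h(s^{i+1})}$ each take $L$ qubits, the index register $\ket{i}$ takes $\lceil\log_2 N\rceil=O(\log n)$ qubits, the outer phase qubit $\ket{\xi}$ takes one, and the inner Grover search for $g_i$ needs an $\lceil\log_2 L\rceil$-qubit index register together with $O(1)$ ancilla; the total is $O(\log p+\log N)=O(\log n+\log m)$, as claimed.

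For correctness I would argue in two stages. By the fingerprinting argument of Section~\ref{sec:roll-hash} with $\delta=N$ comparisons of length-$m$ strings, the event that $h_p(w)=h(s^{i+1})$ holds exactly when $w=s^{i+1}$, simultaneously for all $i$, has probability at least $1-\varepsilon$; on this event the unique-target hypothesis makes $i=d-1$ the only input on which $f$ equals $1$. The oracle $Q$ is realized by running the unknown-count Grover search \cite{bbht98} for an index $j_1$ with $g_i(j_1)=1$ and then evaluating $g_i(j_1)$ directly (an exact comparison of two bits of the hash registers): if $h_p(w)=h(s^{i+1})$ this evaluation always returns $0$ and we output $f(i)=1$ with certainty, while if $h_p(w)\neq h(s^{i+1})$ the search returns a genuine witness, hence $f(i)=0$, with at least constant probability. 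Thus $Q$ is a bounded-error realization of the correct oracle, so by the robust amplitude-amplification search of \cite{hmw2003,bhmt2002} the outer algorithm still locates its unique marked input in $O(\sqrt N)=O(\sqrt n)$ applications of $D$ and $Q$ with constant success probability. Intersecting with the collision-free event and choosing $\varepsilon$ a small enough constant keeps the overall success probability above $1/2$ by a constant margin, and it can be pushed to $2/3$ by $O(1)$ repetitions (a run can be checked by also measuring the hash register and testing equality with $h_p(w)$).

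For the running time, each of the $O(\sqrt n)$ outer iterations applies the diffusion $D$, implementable with $O(\log N)=O(\log n)$ gates, and one call to $Q$. A call to $Q$ is an inner Grover search of $O(\sqrt L)$ iterations (plus its uncomputation); each inner iteration applies the inner diffusion on the $\lceil\log_2 L\rceil$-qubit index register, costing $O(\log L)$ gates, and the oracle for $g_i$, which compares the $j$-th bits of the two hash registers and costs $O(\log L)$ gates. Hence one call to $Q$ costs $O(\sqrt L\,\log L)$ gates, and the whole circuit has size
\[
O\!\Big(\sqrt n\,\big(\log n+\sqrt L\,\log L\big)\Big)=O\!\Big(\sqrt n\,\big(\log n+\sqrt{\log n+\log m}\cdot(\log\log n+\log\log m)\big)\Big),
\]
which is the stated bound.

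The main obstacle is the one the construction itself flags: the oracle $Q$ driving the outer search is not exact but only bounded-error, because it is computed by an inner Grover search that may fail with constant probability. The cheap ``fix'' of boosting the inner search's success probability by $\Theta(\log n)$-fold repetition, so that a union bound over all $O(\sqrt n)$ outer oracle calls survives, would multiply the running time by $\log n$ and destroy the claimed bound; so it is essential that the outer search be run through the error-tolerant amplitude-amplification machinery of \cite{hmw2003}, which absorbs a constant-error oracle at no asymptotic cost. A secondary point that must be pinned down is the exact circuit charged for extracting the $j$-th bit of each hash register inside the inner oracle, since this is what keeps the per-inner-iteration cost at $O(\log L)$.
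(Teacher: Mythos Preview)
Your proposal is correct and follows essentially the same route as the paper: an outer Grover-type search over the $N$ candidate positions whose oracle $Q$ is itself an inner Grover search over the $O(\log p)$ bits of the two hashes, with \cite{hmw2003} invoked to absorb the constant error of the inner search, yielding $O(\sqrt{N}(\log N+\sqrt{\log p}\log\log p))$ time and $O(\log N+\log p)$ qubits. You are in fact more explicit than the paper's own proof in isolating the two delicate points---the need for the error-tolerant outer search and the per-iteration cost of extracting the $j$-th hash bit---both of which the paper passes over without comment.
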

\begin{proof}
Due to description of the algorithm, it finds the index $i$ such that $f(i)=1$, i.e. $h_p(w)=h(s^i)$ with constant probability. Let us say that the probability of success is at least $0.5$. Due to choice of $p$ and results that discussed in Section \ref{sec:roll-hash}, The fact $h_p(w)=h(s^i)$ means $w=s^i$ with probability at least $1-\varepsilon$. So the total probability of success is $0.5\cdot (1-\varepsilon)$. If we want a bigger success probability, then we can repeat the process several times and choose the major result. A similar technique was used, for example, in \cite{abikkpssv2020,kkmsy2020}. Constant times repetitions increase the total time complexity and memory size only in constant times.

Let us discuss the time complexity of the algorithm. Due to \cite{grover1996fast}, the time complexity of Grover's algorithm is $O(\sqrt{M}\log M)$  in a case of searching the target element among $M$ elements and constant time implementation of the oracle. In fact, the time complexity of all Grover's diffusion operators is  $O(\sqrt{M}\log M)$ and all Oracle operators is $O(\sqrt{M})$.  The modification with bounded-error oracle \cite{hmw2003} has only constant time bigger time complexity.  In our case $M=N$ and the oracle is complex. The implementation of $f$ function  has $O(\sqrt{\log p}\log \log p)$ time complexity because there are $\log_2 p$ objects for search. Hence, the total time complexity is $O(\sqrt{N}\log N + \sqrt{N}\cdot \sqrt{\log p}\log \log p)$. Here $p=\frac{\delta \cdot m}{\varepsilon}=\frac{N \cdot m}{\varepsilon}$ and $ O(\sqrt{\log p}\log \log p)= O((\log N+\log m -\log \varepsilon)^{0.5}\cdot(\log (\log N+\log m -\log \varepsilon))=O((\log N+\log m)^{0.5}\cdot (\log\log N+\log\log m))$. Therefor, the time complexity is
\[ O(\sqrt{N}\log N + \sqrt{N\cdot (\log N+\log m)}\cdot \sqrt{\log p}\log \log p)=\]\[=O(\sqrt{N}\log N + \sqrt{N(\log N+\log m)}\cdot (\log\log N+\log\log m))=\]
remember that $N=n-m$, therefore
\[=O(\sqrt{n}(\log n + \sqrt{\log n+\log m}\cdot (\log\log n+\log\log m))).\]

Let us discuss the memory complexity. The main part of the algorithm requires $O(\log N + \log p)$ qubits. Additionally, we need $O(\log\log p)$ for Grover's Search that implements the function $f$. Therefore, the total complexity is  
\[O(\log N + \log p+\log\log p)=O(\log N + \log N+\log m +\log\log N+\log\log m)=\]\[=O(\log N+\log m)=O(\log n+\log m).\]
Finally, we have proved the claim.
\end{proof}

\subsubsection{Multi-Target Case}
Let us consider the general case when the string $w$ can occur in $s$ several times. As mentioned in Section \ref{sec:grover}, we should repeat our algorithm $\log_2 N$ times with a different number of iterations. For several repetitions of the algorithm, we should have an unchangeable part of a quantum memory that holds all hashes $h_p(s^i)$. At the same time, our algorithm destroys the quantum state that holds the required data. 

Therefore, we should have $\log_2 N$ copies of our state that allow us to repeat the process several times. That is why we use the initial state in the (\ref{eq:state}) form.

Let us analyze the complexity of the algorithm.

\begin{theorem}\label{th:sm}
The presented algorithm solves string matching problem with bounded error, has $O(\sqrt{n}(\log n + \sqrt{\log n+\log m}\cdot (\log\log n+\log\log m)))$ time complexity and uses $O((\log n)^2+\log n \cdot \log m)$ qubits of memory.
\end{theorem}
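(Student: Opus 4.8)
The plan is to reduce Theorem \ref{th:sm} to Lemma \ref{lm:sm-unique-target} by carefully accounting for the cost of the Brassard--Boyer--H{\o}yer--Tapp (BBHT) wrapper around the unique-target procedure, and to verify that the new memory bound arises solely from keeping $\log_2 N$ independent copies of the hashed-data register. First I would recall from Section \ref{sec:grover} that when the number of occurrences is unknown, one runs the Grover core $\log_2(\sqrt{N})$ times, performing $2^j$ iterations of $D$ and $Q$ on the $j$-th run, which yields a target index with probability at least $1/2$. The key structural observation, already noted in the text, is that each run is destructive: the superposition $\frac{1}{\sqrt N}\sum_a \ket{a}\otimes\ket{h(s^{a+1})}$ is consumed by the measurement, so a fresh copy is needed for each of the $O(\log N)$ runs. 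Hence the initial state (\ref{eq:state}) carries $\log_2 N$ tensor factors of the form $\frac{1}{\sqrt N}\sum_a \ket{a}\otimes\ket{h(s^{a+1})}$, plus the single shared register $\ket{h_p(w)}$.

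Next I would bound the time complexity. Each individual run is exactly the unique-target algorithm of Lemma \ref{lm:sm-unique-target} but truncated to $2^j \le O(\sqrt N)$ Grover iterations, so its cost is at most that of the full unique-target run, namely
\[
O\!\left(\sqrt{n}\bigl(\log n + \sqrt{\log n+\log m}\cdot(\log\log n+\log\log m)\bigr)\right).
\]
Summing a geometric series of iteration counts $2^0,2^1,\dots,2^{\log_2\sqrt N}$ the total number of $(D,Q)$ applications is still $O(\sqrt N)$, so the overall time complexity is the same expression as in the unique-target lemma, up to constant factors; the extra $\log_2 N$ copies do not change the running time because each run touches only one copy. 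I would also remark that, as in the proof of Lemma \ref{lm:sm-unique-target}, constant-times repetition to boost the success probability above a fixed threshold (combined with the $1-\varepsilon$ correctness of the hash test) only inflates the constants.

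For the memory bound I would simply multiply: the single-copy register uses $O(\log N + \log p)$ qubits with $p = Nm/\varepsilon$, i.e. $O(\log n + \log m)$ qubits, and we need $\log_2 N$ such copies, giving $O(\log N(\log N + \log m)) = O((\log n)^2 + \log n\cdot\log m)$; the shared $\ket{h_p(w)}$ register and the $O(\log\log p)$ ancilla qubits used by the inner Grover search for $f$ are lower-order and absorbed. This yields the claimed $O((\log n)^2 + \log n\cdot\log m)$ qubit count.

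The main obstacle — really the only subtle point — is justifying that the BBHT wrapper still works correctly when the oracle $f$ is itself bounded-error (implemented via the inner amplitude-amplification search over the $\log_2 p$ hash bits). One must invoke the bounded-error variant of Grover/amplitude amplification \cite{hmw2003,bhmt2002} inside \emph{each} of the $\log_2 N$ runs and argue that the success probability per run is bounded below by a constant, so that the standard BBHT analysis (which assumes an exact oracle) goes through with that constant in place of its usual value; this costs only constant overhead. A secondary bookkeeping point is confirming that $N = n-m+1 = \Theta(n)$ only in the regime $m \le n/2$ or similar — but since the statement is phrased in terms of both $\log n$ and $\log m$, the bound $\log N \le \log n$ suffices and no case split is needed. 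With these points checked, the theorem follows directly from Lemma \ref{lm:sm-unique-target} and the copy-counting argument.
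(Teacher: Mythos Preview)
Your proposal is correct and follows essentially the same route as the paper: invoke Lemma~\ref{lm:sm-unique-target} for a single run, sum the geometric series of iteration counts $2^j$ over the $O(\log N)$ BBHT repetitions to recover the $O(\sqrt{N})$ total, and multiply the per-copy qubit cost $O(\log n+\log m)$ by the $\log_2 N$ copies to obtain the memory bound. Your discussion of the bounded-error oracle inside the BBHT wrapper and the $\log N\le\log n$ bookkeeping is more explicit than the paper's own proof, but these are elaborations rather than a different argument.
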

\begin{proof}
Due to Lemma \ref{lm:sm-unique-target}, the algorithm for unique target solves the problem with bounded error, has $O(\sqrt{n}(\log n + \sqrt{\log n+\log m}\cdot (\log\log n+\log\log m)))$ time complexity and uses $O(\log n+\log m)$ qubits of memory.

Let us discuss time complexity. Due to \cite{bbht98,grover1996fast}, if we do $b$ iterations of the Grover's search algorithm for $M$ elements, then time complexity is $O(b\log M)$. In our algorithm we do $2^j$ iterations for $j$-th step, where $j\in\{0,\dots\lceil\log_2\sqrt{M}\rceil\}$ and $M=N$. Therefore, similar to the proof of Lemma \ref{lm:sm-unique-target}, we can show that  time complexity is 
\[O\left(\sum_{j=1}^{\log_2\sqrt{N})}2^j\left(\log n + \sqrt{\log n+\log m}\cdot (\log\log n+\log\log m)\right)\right)=\]
Due to the properties of the sum of geometric progression and $N=n-m$, we have
\[=O\left(\sqrt{n}\left(\log n + \sqrt{\log n+\log m}\cdot (\log\log n+\log\log m)\right)\right).\]

We have $\log_2{N}$ copies of qubits. Each of them is used for a single invocation of the algorithm for a unique target. Therefore, the total memory complexity is 
$O(\log n(\log n+\log m))=O((\log n)^2+\log n \cdot \log m)$.
\end{proof}

%%%%%%%%%%%%%%%%%%%%%%%%%%%%%%%%%%%%%%%%%%%%%%%%%%%%%
%%  Quantum Algorithm for String Comparing Problem %%
%%%%%%%%%%%%%%%%%%%%%%%%%%%%%%%%%%%%%%%%%%%%%%%%%%%%%
\section{Quantum Algorithm for String Comparing Problem}\label{sec:comp}
Let us discuss the algorithm for String Comparing Problem. There are two algorithms. The first one is based on Grover's search algorithm that was discussed in Section \ref{sec:grover}. The second one is faster and based on comparing strings using Binary search algorithm and rolling hash that was discussed in Section \ref{sec:roll-hash}, but it requires a more complex initial state.

\subsection{The Algorithm Based on Grover's Search Algorithm}

Let $k=\min(|u|,|v|)$ for strings $u$ and $v$. As it was discussed in Section \ref{sec:roll-hash}, for comparing two string $u$ and $v$, it is enough to find the Longest common prefix. We can use an idea similar to \cite{ki2019,kkmsy2020}. 
Let us consider a function $g':\{1,\dots,k\}\to\{0,1\}$ such that $k=\min(|u|,|v|)$, $g'(i)=1$ iff $u_i\neq v_i$. If we found the  smallest lexicographical element of the sequence $(1-g'(i),i)$ for $i\in\{1,\dots,k\}$, then it corresponds to the minimal argument $i_1$ such that $g'(i_1)=1$. Such idea is used in \cite{k2014,ll2015,ll2016,kkmsy2020} algorithms for searching the first target object.

We can use the D{\"u}rr-H{\o}yer algorithm for minimum search \cite{dh96,dhhm2006}. Let us briefly present its idea in Section \ref{sec:min} and then present algorithm itself in Section \ref{sec:comp1}.

\subsubsection{D{\"u}rr-H{\o}yer Minimum Search Algorithm}\label{sec:min}
The problem is searching for the index of minimal element among $(a_1,\dots,a_M\}$ for some positive integer $M$.

The algorithm contains several phases. The $0$-th phase is an assumption that minimal element $y^0$ is any element. On $i$-th phase, we have an assumption that the minimal element is $y^i$. Then, we run the Grover's search algorithm for searching for a smaller than $y^i$ element. We consider a function $g^{i}:\{1,\dots,M\}\to\{0,1\}$ such that $g^i(j)=1$ iff $a_j<y^i$. The algorithm finds any argument $j$ such that $g^i(j)=1$ and updates the assumption of the minimum by assigning  $y^{i+1}\gets a_j$, where $g^i(j)=1$.

Due to \cite{dhhm2006}, the expected number of phases is $O(\log M)$. At the same time, the expected number of all iterations of all invocations of Grover's search algorithm is $O(\sqrt{M})$. Due to Markov's inequality, if the algorithm stops after $3$ times more phases than the expectation, then we get a result with bounded error.  

Note that used Grover's search implementation is the algorithm for multi-target case.
\subsubsection{The Main Part of the Algorithm}\label{sec:comp1}
Assume that the initial state for our algorithm is the following one
\begin{equation}\label{eq:comp-main-state}
   \ket{\varphi}=\ket{\xi}\bigotimes_{i=1}^{3\log_2 k}\bigotimes_{t=1}^{\log_2 k}\frac{1}{\sqrt{k}}\sum\limits_{a=0}^{k-1}\ket{a}
    \otimes \ket{u_a}\otimes\ket{v_a}, \quad \ket{\xi}=\frac{1}{\sqrt{k}}\sum\limits_{a=0}^{k-1}\ket{a}
    \otimes \ket{u_a}\otimes\ket{v_a}.
\end{equation}

%We run a g
%Here $\ket{y^0}=\ket{\phi^0}\ket{\psi^0}$ stores an assumption on minimum that is $(1-g'(1),0)$. 

%Firstly, we compute $g'(1)$ and store $1-g'(1)$ to $\ket{\phi^0}$. Computing $g'(i)$ is checking whether $\ket{u_i}=\ket{v_i}$. We can do in in constant time and memory complexity.
%Secondly, we assign $\ket{\psi^0}\gets\ket{0}$.

As in Section \ref{sec:smatch}, we implement Grover's search algorithm on our state. Let us discuss $i$-th phase of the algorithm.
We use 
\begin{equation}\label{eq:comp-phase-state}
\bigotimes_{t=1}^{\log_2 k}\frac{1}{\sqrt{k}}\sum\limits_{a=0}^{k-1}\ket{a}
    \otimes \ket{u_a}\otimes\ket{v_a}.
\end{equation}
Let $i=0$. We invoke Grover's search algorithm on the quantum state and find any $j_1$ such that $g(j_1)=1$. Note, that computing $g$ have constant time and memory complexity because it is comparing two qubits for equality. Then, we store $1-g(j_1)$  to a qubit $\ket{\phi^0}$ and we denote the obtained index as a qubit $\ket{\psi^0}$.

Let us consider the case of $i>0$.
Assume that we have a function $comp:\{0,1\}\times\{1,\dots,k\} \times \{0,1\}\times\{1,\dots,k\} \to \{0,1\}$ that compares two pairs $(q,i)$ and $(q',i')$ in lexicographical order, i.e. $comp(q,i,q',i')=1$ iff $q<q'$ or $(q=q') \& (i<i')$. The function can be implemented in constant time and memory complexity because each value is a single qubit.
We can say that $g^i(j)=comp(\ket{\phi^{i}},\ket{\psi^i},\ket{1-g'(j)}\ket{j})$.
Using the state (\ref{eq:comp-phase-state}) and $\ket{\phi^{i}}\ket{\psi^i}$ we can implement multi-target Grover's search as in Section \ref{sec:smatch}. After measurement we obtain a result index $j$ and value of the function $g(j)$. Then, we store them to the register $\ket{\phi^{i+1}}\ket{\psi^{i+1}}$.

Then, we do $3\log_2 k$ phases using new copies of the state (\ref{eq:comp-phase-state}). Finally, we obtain the minimal index $i_0$ of unequal symbols. We can access to $i_0$-th element of state $\ket{\xi}$, compare $u_{i_0}$ and $v_{i_0}$, and return the answer according to the discussion in Section \ref{sec:roll-hash}. For accessing to $i$-th element, we can swap it with $0$-th element using CNOT gates and then apply Hadamard transformation for collecting whole amplitude in $0$-th element. These operations require $O(\log k)$ time complexity.

\begin{theorem}\label{th:comp1}
The presented algorithm solves string comparing problem with bounded error. It has $O(\sqrt{k}\log k)$ time complexity and uses $O((\log k)^3)$ qubits of memory.
\end{theorem}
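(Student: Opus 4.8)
The plan is to cast the comparison as a quantum minimum‑finding problem. With $g'\colon\{1,\dots,k\}\to\{0,1\}$, $g'(i)=1$ iff $u_i\ne v_i$ (computed with no error, since in every copy of the state the data qubits $\ket{u_i},\ket{v_i}$ sit next to the index register), I would run the D\"urr--H\o yer algorithm on the sequence of pairs $(1-g'(i),\,i)_{i=1}^{k}$ ordered lexicographically; with constant probability it returns the smallest such pair. If its first coordinate is $0$, then some $u_i\ne v_i$ and its second coordinate is the first index $i_0$ at which $u$ and $v$ disagree, so $lcp(u,v)=i_0-1$ and the outcome is read off by comparing $u_{i_0}$ and $v_{i_0}$; if its first coordinate is $1$, then $u[1,k]=v[1,k]$, hence $lcp(u,v)=k$ and the outcome is decided by the classically known lengths $|u|,|v|$ — this is exactly the case analysis of Section~\ref{sec:roll-hash}. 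Because $g'$ and $comp$ are evaluated exactly, the only error is the intrinsic randomness of the minimum search; its constant success probability is amplified to bounded error by a constant number of independent runs on fresh copies of the state~(\ref{eq:comp-main-state}), costing only a constant factor.

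For the running time I would invoke the D\"urr--Heiligman--H\o yer--Mhalla analysis \cite{dhhm2006}: the algorithm uses $O(\log k)$ phases, and because the multi‑target Grover search \cite{bbht98} inside phase $i$ stops as soon as it produces a candidate — a phase with $t_i$ marked positions contributing $O(\sqrt{k/t_i})$ iterations in expectation — the \emph{total} number of Grover iterations over all phases telescopes to $O(\sqrt{k})$ in expectation; truncating the cumulative iteration budget at $C\sqrt{k}$ and the phase count at $3\log_2 k$ turns this into a definite bound at the cost of a bounded failure probability by Markov's inequality. One Grover iteration is one diffusion on $O(\log k)$ qubits, of cost $O(\log k)$ by \cite{grover1996fast}, plus one oracle evaluation $g^i(j)=comp(\phi^i,\psi^i,1-g'(j),j)$ of cost $O(\log k)$ (comparing a bit and a $\lceil\log_2 k\rceil$‑bit index against the stored assumption, with uncomputation). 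Hence the search costs $O(\sqrt{k})\cdot O(\log k)=O(\sqrt{k}\log k)$. The final step extracts $u_{i_0},v_{i_0}$ from the untouched register $\ket{\xi}$ by bringing its $i_0$‑th branch to index $0$ with $O(\log k)$ CNOTs and collecting amplitude with a Hadamard layer, in $O(\log k)$ time, after which the comparison and the length‑based case split are $O(1)$. Summing gives $O(\sqrt{k}\log k)$.

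For the space I would just tally the fixed state~(\ref{eq:comp-main-state}): $\ket{\xi}$ occupies $\lceil\log_2 k\rceil+2=O(\log k)$ qubits, and the tensor product provides $3\log_2 k$ phase blocks, each a string of $\log_2 k$ copies of an $O(\log k)$‑qubit register, i.e.\ $O((\log k)^2)$ copies in total and $O((\log k)^3)$ qubits. This is sufficient: each phase runs a search over at most $k$ candidates, which in the exponential‑ramp schedule uses $O(\log k)$ rounds, covered by the $\log_2 k$ copies the fixed state allots per phase (unused copies left idle), and $3\log_2 k$ phases suffice by the Markov bound above. The per‑phase assumption registers $\ket{\phi^i}\ket{\psi^i}$, $i=0,\dots,3\log_2 k$, add $O((\log k)^2)$ qubits, and the diffusion/oracle work space adds $O(\log k)$; both are dominated by the $O((\log k)^3)$ term, and the constant‑factor amplification does not change this. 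As throughout the paper, the initial state is assumed to be prepared in advance.

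The step I expect to be the main obstacle is the time bound: one must avoid the naive estimate that charges each of the $\Theta(\log k)$ phases a worst‑case $\Theta(\sqrt k)$ iterations — which would yield only $O(\sqrt k(\log k)^2)$ — and instead use that the early‑stopping behaviour of the multi‑target Grover search makes the per‑phase costs telescope to a global $O(\sqrt k)$ per \cite{dhhm2006}, then convert the resulting ``in expectation'' bound into a worst‑case running time by truncation plus Markov, all while keeping the number of phases $O(\log k)$ so that the pre‑fixed state~(\ref{eq:comp-main-state}) really does contain enough copies. The correctness of the reduction and the remaining gate counts are routine.
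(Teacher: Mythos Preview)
Your proposal is correct and follows essentially the same route as the paper: correctness via the D\"urr--H{\o}yer reduction to minimum finding over the pairs $(1-g'(i),i)$, the $O(\sqrt{k})$ cumulative iteration bound from \cite{dhhm2006} combined with an $O(\log k)$ per-iteration cost to get $O(\sqrt{k}\log k)$ time, and a direct tally of the state~(\ref{eq:comp-main-state}) for the $O((\log k)^3)$ space bound. Your write-up is in fact more careful than the paper's on two points: you charge the oracle $g^i(j)=comp(\phi^i,\psi^i,1-g'(j),j)$ a cost of $O(\log k)$ for comparing $\lceil\log_2 k\rceil$-bit indices, whereas the paper asserts this is constant (your bound is the accurate one, though it does not change the final asymptotics since diffusion already costs $O(\log k)$); and you spell out explicitly how the expected-iteration guarantee is converted into a worst-case bound via truncation and Markov, which the paper leaves implicit.
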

\begin{proof}
The algorithm solves the problem because it implements the idea from \cite{ki2019}. The probability of success is constant because of the properties of the D{\" u}rr-H{\o}yer algorithm for minimum search \cite{dh96,dhhm2006}.

Let us compute time complexity of the algorithm. Due to the properties of the D{\" u}rr-H{\o}yer algorithm, the total number of iterations of all invocations of Grover's search is $O(\sqrt{k}
)$. Time complexity of computing $g(i)$ and $comp$ are constant. Therefore, the total time complexity is $O(\sqrt{k}\log k)$.

Let us consider the memory complexity. We need $O((\log k)^3)$ qubits for state (\ref{eq:comp-main-state}).
\end{proof}

\subsection{The Algorithm Based on Binary Search}\label{sec:comp2}
Let us implement the idea with the Binary search algorithm that was discussed in Section \ref{sec:roll-hash-bins}.

Let $k=\min(|u|,|v|)$ for two comparing strings $u$ and $v$.
Let us choose a prime $p$ from the first $\frac{\delta \cdot k}{\varepsilon}$ prime numbers, where $0< \varepsilon<1$ is some constant and  $\delta=k$ because we will have $k$ hashes of substrings of the string $u$ and $v$.

Assume that the initial state for our algorithm is the following.
\begin{equation}\label{eq:comp2-phase-state}
\ket{\phi}\otimes\bigotimes_{t=1}^{\log_2 k}\frac{1}{\sqrt{k}}\sum\limits_{a=0}^{k-1}\ket{a}
    \otimes \ket{h(u[1,a+1])}\otimes\ket{h(v[1,a+1])}, \quad\ket{\phi}=\sum\limits_{a=0}^{k-1}\ket{a}\ket{u_a}
\end{equation}

We can find the first $a_0$ such that $h(u[1,a_0+1])\neq h(v[1,a_0+1])$ using Binary search algorithm because of arguments from Section \ref{sec:roll-hash-bins}.
On each phase, we should access to some middle element with an index $mid$ and compare $h(u[1,mid+1])$ and $h(v[1,mid+1])$. For accessing to $i$-th element we can swap it with $0$-th element using CNOT gates, and then apply Hadamard transformation for collecting whole amplitude in $0$-th element. These operations require $O(\log k)$ time complexity.

Next, we should compare two strings of length $O(\log p)$ for equality. We can do it using Grover's search algorithm as it was done in Section \ref{sec:smatch}. The time complexity of this algorithm is $O(\sqrt{\log p}\log \log p)$ and memory complexity is $O(\log\log p)$ qubits.

Therefore, after $\log_2 k$ steps of the Binary search algorithm, we obtain the minimal index $a_0$ such that  $h(u[1,a_0+1])\neq h(v[1,a_0+1])$. If  $h(u[1,a_0+1])= h(v[1,a_0+1])$, then $u=v$. If  $h(u[1,a_0+1])\neq h(v[1,a_0+1])$, then we can access to $a_0$-th element of $\ket{\phi}$ for accessing to $u_{a_0}$. If $u_{a_0}=0$, then $u<v$ and $u>v$ otherwise. 

\begin{theorem}\label{th:comp2}
The presented algorithm solves string comparing problem with bounded error. It has $O((\log k)^{2}\log \log k)$ time complexity and uses $O((\log k)^2)$ qubits of memory.
\end{theorem}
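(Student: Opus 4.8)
The plan is to verify the two complexity bounds claimed in Theorem~\ref{th:comp2} by analyzing the binary-search-based comparison algorithm described just above the statement. The correctness of the algorithm follows from the discussion in Section~\ref{sec:roll-hash-bins}: binary search over $mid\in\{1,\dots,k\}$ correctly locates the first index $a_0$ where the prefix rolling hashes of $u$ and $v$ differ, and by the fingerprinting bound this equals $lcp(u,v)$ with error probability controlled by the choice of $p$ from the first $\frac{\delta\cdot k}{\varepsilon}$ primes with $\delta=k$; the single-bit comparison of $u_{a_0}$ (together with the length comparison) then yields $-1$, $0$, or $+1$ as in Section~\ref{sec:roll-hash-bins}. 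I would first state explicitly that each of the $O(\log k)$ binary-search phases invokes the bounded-error equality test once, so the total number of equality tests is $O(\log k)$, and the error probability of each must therefore be driven down by a constant factor of $\Theta(\log k)$ repetitions (or, more cheaply, absorbed into the $\delta=k$ choice of prime modulus plus a constant number of repetitions of the whole procedure, exactly as in the proof of Lemma~\ref{lm:sm-unique-target}).

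For the time complexity, I would itemize the cost of one binary-search phase: (i) accessing the $mid$-th element of the hash registers, which costs $O(\log k)$ by the swap-and-Hadamard trick stated in the algorithm description; (ii) comparing the two strings $h(u[1,mid+1])$ and $h(v[1,mid+1])$ of length $O(\log p)$ for equality via Grover's search, which by the analysis reused from Section~\ref{sec:smatch} costs $O(\sqrt{\log p}\,\log\log p)$ time. Since $p=\frac{\delta\cdot k}{\varepsilon}=\frac{k^2}{\varepsilon}$, we have $\log p = O(\log k)$, so a single phase costs $O(\log k + \sqrt{\log k}\,\log\log k) = O(\log k)$. Multiplying by the $O(\log k)$ phases gives $O((\log k)^2)$; the $\log\log k$ factor in the theorem statement comes from the fact that the equality-test subroutine's error must be suppressed — running it $O(\log\log k)$ times per invocation (so that a union bound over the $O(\log k)$ phases still gives constant total error) yields the stated $O((\log k)^2\log\log k)$ bound. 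I would present this as: per-phase cost $O(\log k \cdot \log\log k)$ after error-reduction, times $O(\log k)$ phases.

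For the memory complexity, I would account for the registers of the initial state~(\ref{eq:comp2-phase-state}). The register $\ket{\phi}=\sum_a\ket{a}\ket{u_a}$ uses $O(\log k)$ qubits. Each of the $\log_2 k$ tensor factors $\frac{1}{\sqrt{k}}\sum_a\ket{a}\otimes\ket{h(u[1,a+1])}\otimes\ket{h(v[1,a+1])}$ uses $O(\log k + \log p) = O(\log k)$ qubits, and there are $\log_2 k$ of them, for $O((\log k)^2)$ total. The Grover-search equality subroutine needs an extra $O(\log\log p)=O(\log\log k)$ work qubits, which is negligible. Hence the total is $O((\log k)^2)$ qubits. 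I should also note that, unlike the Grover-based algorithm of Theorem~\ref{th:comp1} which needed $3\log_2 k$ phases each requiring a fresh copy of the state (hence $(\log k)^3$), here the binary search is a classical outer loop of deterministically-chosen $mid$ values, so the $\log_2 k$ copies indexed by $t$ suffice for the $\log_2 k$ internal Grover amplifications within each equality test rather than for repeating phases — this is the reason for the exponential memory savings, and it is worth making explicit.

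The main obstacle, as in Lemma~\ref{lm:sm-unique-target}, is the careful bookkeeping of error probabilities: the equality-test oracle is itself bounded-error, so it must be used inside the amplitude-amplification framework of~\cite{hmw2003,bhmt2002}, and then its error must be amplified down far enough that a union bound over all $O(\log k)$ binary-search phases (and the constant number of outer repetitions) still leaves constant total error — this is exactly what forces the extra $\log\log k$ factor and must be tracked consistently between the time bound and the choice of $p$. The geometric-series and logarithm simplifications ($\log p = O(\log k)$, $\sqrt{\log k}\,\log\log k = O(\log k)$) are routine and I would not belabor them.
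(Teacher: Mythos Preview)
Your decomposition --- correctness from Section~\ref{sec:roll-hash-bins}, time as (number of phases) times (per-phase cost), memory by counting the registers in~(\ref{eq:comp2-phase-state}) --- is the same as the paper's. Two departures deserve comment.

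First, the paper does not introduce any error-amplification step. It simply writes the per-phase cost as $O(\sqrt{\log p}\,\log\log p+\log k)$, multiplies by $O(\log k)$ phases, and simplifies (loosely) to $O((\log k)^{2}\log\log k)$; the extra $\log\log k$ in the statement comes from slack in that simplification, not from repeating the equality test. Your union-bound amplification argument is a more careful treatment than the paper offers, but it is your addition, not how the paper arrives at the stated bound.

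Second, your explanation of what the $\log_{2}k$ tensor copies in~(\ref{eq:comp2-phase-state}) are for is off. They are consumed one per binary-search phase, because each phase's swap-and-Hadamard access to the $mid$-th element collapses that copy; they are \emph{not} for ``the $\log_{2}k$ internal Grover amplifications within each equality test.'' The equality test runs Grover on the $O(\log p)$ already-extracted hash bits, which then sit in a definite state and can be reused across its own BBHT repetitions without consuming further copies of the large superposition. Your final memory count is unaffected, but note that your stated per-phase cost $O(\log k\cdot\log\log k)$ implicitly repeats the \emph{access} step $O(\log\log k)$ times as well, which would actually require $O(\log k\cdot\log\log k)$ copies rather than $\log_{2}k$. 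So either the amplification should be applied only to the Grover subroutine (in which case it contributes $o(\log k)$ per phase and the time bound is really $O((\log k)^{2})$), or the memory accounting needs to be revisited.
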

\begin{proof}
The algorithm solves the problem because it implements the idea from Section \ref{sec:roll-hash-bins}.

Let us compute time complexity of the algorithm. There are $O(\log k)$ phases of Binary search. Each phase requires comparing hashes in $O(\sqrt{\log p} \log \log p)$ and accessing to $mid$-th element in $O(\log k)$.
The final step is accessing to element with $O(\log k)$ time complexity. The final time complexity is

$O(\log k \cdot (\sqrt{\log p} \log \log p + \log k) + \log k) =$ $O((\log k)\cdot (\sqrt{\log k} \log\log k+\log k))=$ $O((\log k)^{2}\log \log k)$.

Let us consider the memory complexity. We need $O(\log k\cdot(\log k+ \log p) + \log k)=O((\log k)^2)$ qubits for state (\ref{eq:comp2-phase-state}) and $O(\log\log p)=O(\log\log k)$ states for the implementation of two hashes comparing. So, the total memory complexity is $O((\log k)^2)$.
\end{proof}

\section{Conclusion}\label{sec:conclusion}
In the paper, we presented algorithms for two problems - String matching problem and String comparing problem. The algorithm for the String matching problem works as fast as the best-known quantum algorithm up to a log factor. At the same time, our algorithm uses exponentially fewer qubits of memory. We have presented two algorithms for string comparing problem. Both use exponentially fewer qubits comparing to the best-known algorithm for the problem. The first one is based on Grover's search algorithm and uses more qubits than the second one based on Binary search. The second algorithm works exponentially faster than the first one and than the existing algorithm \cite{ki2019}. At the same time, the initial state of the second algorithm is more complex compared to the initial state of the first algorithm. 

The initial state of all algorithms is not just stored input in quantum memory. At the same time, preparing this state is not much harder than storing input data in quantum memory as is. Additionally, these algorithms can be used as a part of other algorithms. A similar motivation is presented in different papers, for example, in \cite{hhl2009}.

\begin{acknowledgement}
The research is funded by the subsidy allocated to Kazan Federal University
for the state assignment in the sphere of scientific activities, project No. 0671-2020-0065.
A part of the reported study (Section \ref{sec:comp}) was funded by RFBR according to the research project No.20-37-70080. 
\end{acknowledgement}

%\bibliographystyle{Ablayev/spbasic}%
%\bibliography{references}

%

%%%%%%%%%%%%%%%%%%%%%%%%%%%%%%%%%%%%%%%%%%%%%%%%%%%%%%%%%%%%%%%%%%%%%%

\end{document}